\documentclass[runningheads]{llncs}

\usepackage{makeidx}
\usepackage{epsfig}
\usepackage{amsmath}
\usepackage{graphicx}
\usepackage{epstopdf}
\usepackage{url,epsfig,amssymb}

\begin{document}

\pagestyle{headings}

\mainmatter

\title{A note on the clique number of complete $k$-partite graphs}

\titlerunning{A note on the clique number of complete $k$-partite graphs}

\author{Boris Brimkov}

\authorrunning{B. Brimkov}

\institute{
Computational \& Applied Mathematics, Rice University, Houston, TX 77005, USA\\
\email{boris.brimkov@rice.edu}
}

\maketitle

\begin{abstract}
In this note, we show that a complete $k$-partite graph is the only graph with clique number $k$ among all degree-equivalent simple graphs. This result gives a lower bound on the clique number, which is sharper than existing bounds on a large family of graphs.

\smallskip

{\bf Keywords:} Clique number, independence number, complete $k$-partite graph, Tur\'an graph

\end{abstract}

\section{Preliminaries}

We first recall select graph theoretic notions used in the sequel; see \cite{bondy} for further details. All graphs considered in this note are simple graphs. 

Let $G=(V,E)$ be a graph. The number of vertices and edges in $G$ are denoted by $n$ and $m$, respectively. The \emph{neighborhood} in $G$ of a vertex $v$, denoted $N(v;G)$, is the set of vertices in $G$ adjacent to $v$; the \emph{degree} of $v$ in $G$, denoted $d(v;G)$, is equal to $|N(v;G)|$. The \emph{degree sequence} of $G$, denoted $D(G)$, is the multiset of degrees of the vertices of $G$, i.e., $D(G)=\{d(v_1;G), \ldots , d(v_n;G)\}$. Two graphs $G$ and $H$ are \emph{degree equivalent}, denoted $G\simeq H$, if they have the same degree sequence. We exclude graphs with loops and multiple edges from being degree-equivalent to a given graph.

Given $S\subset V$, the \emph{induced subgraph} $G[S]$ is the subgraph of $G$ whose vertex set is $S$ and whose edge set consists of all edges of $G$ which have both ends in $S$. The \emph{complement} of $G$, denoted $\overline{G}$, is the graph on the same vertex set in which two vertices are adjacent if and only if they are not adjacent in $G$. 

The \emph{clique number} of $G$, denoted $\omega(G)$, is the cardinality of the largest clique in $G$. An \emph{independent set} in $G$ is a set of vertices no two of which are adjacent; the \emph{independence number} of $G$, denoted $\alpha(G)$, is the cardinality of the largest independent set in $G$. A \emph{complete $k$-partite graph} $K_{a_1,\ldots, a_k}$ is a graph whose vertices can be partitioned into $k$ independent sets (called \emph{parts}) with sizes $a_1,\ldots,a_k$ so that any two vertices in different parts are adjacent.

\begin{remark}
An independent set in $G$ is a clique in $\overline G$, and the complement of a complete $k$-partite graph $K_{a_1,\ldots, a_k}$ is a disjoint union of complete graphs $K_{a_1}\cup\ldots\cup K_{a_k}$. Moreover, if $G\simeq K_{a_1,\ldots, a_k}$, then $\overline{G}\simeq K_{a_1}\cup\ldots\cup K_{a_k}$. Thus, results about cliques in $k$-partite graphs can typically be restated as results about independent sets in disjoint unions of complete graphs; this duality will be employed in the next section.
\end{remark}


\section{Main results}

Complete $k$-partite graphs and their complements play a fundamental role in extremal graph theory. A notable $k$-partite graph is the \emph{Tur\'an graph} $T(n,k)$, whose parts have sizes $\lfloor n/k\rfloor$ and $\lceil n/k\rceil$; the number of edges of $T(n,k)$ is denoted $t(n,k)$. The following well-known theorem gives an upper bound on the number of edges of a $K_{k+1}$-free graph. 
\newline

\noindent \textbf{Tur\'an's Theorem \cite{aigner,turan2}.} \emph{The graph $T(n,k)=K_{\lfloor\frac{n}{k}\rfloor,\ldots,\lceil\frac{n}{k}\rceil}$ is the unique $K_{k+1}$-free graph with the maximal number $t(n,k)$ of edges.}
\newline

\noindent From Tur\'an's Theorem, it follows that among all graphs with $t(ka,k)$ edges where $a$ is some positive integer, the only $K_{k+1}$-free graph is $T(ka,k)=K_{a,\ldots,a}$. This yields the following corollary.

\begin{corollary}
Let $G\simeq K_{a,\ldots,a}$. Then, $\omega(G)=k$ if and only if $G=K_{a,\ldots,a}$.
\end{corollary}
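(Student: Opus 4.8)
The plan is to obtain both directions almost immediately from Tur\'an's Theorem, once I record the single consequence of degree-equivalence that makes the theorem applicable. First I would note that $K_{a,\ldots,a}$ has $n=ka$ vertices, each of degree $a(k-1)$, so it coincides with the Tur\'an graph $T(ka,k)$ and has exactly $t(ka,k)=\binom{k}{2}a^2$ edges. Since $G\simeq K_{a,\ldots,a}$ means that $G$ and $K_{a,\ldots,a}$ share a degree sequence, and since the number of edges of a graph equals half the sum of its degrees, $G$ must likewise have exactly $t(ka,k)$ edges.

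For the forward implication, suppose $\omega(G)=k$. Then $G$ contains no clique on $k+1$ vertices, i.e., $G$ is $K_{k+1}$-free. As established above, $G$ has $t(ka,k)$ edges, which is the maximal number attainable by a $K_{k+1}$-free graph on $ka$ vertices; the uniqueness clause of Tur\'an's Theorem therefore forces $G=T(ka,k)=K_{a,\ldots,a}$.

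For the reverse implication, suppose $G=K_{a,\ldots,a}$. Every clique meets each of the $k$ parts in at most one vertex, since the parts are independent sets, so $\omega(G)\le k$; conversely, selecting one vertex from each part produces a $k$-clique, as vertices in distinct parts are adjacent, so $\omega(G)\ge k$. Hence $\omega(G)=k$.

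I do not expect a genuine obstacle: the argument reduces to the observation that degree-equivalence pins the edge count of $G$ at the Tur\'an threshold $t(ka,k)$, after which the uniqueness part of Tur\'an's Theorem does all the work. The only point worth a moment's care is that the forward direction uses merely the inequality $\omega(G)\le k$ (equivalently, $K_{k+1}$-freeness) supplied by the hypothesis; the matching inequality $\omega(G)\ge k$ is not needed there, and in fact holds automatically for every $G\simeq K_{a,\ldots,a}$, since a $K_k$-free graph on $ka$ vertices would have strictly fewer than $t(ka,k)$ edges.
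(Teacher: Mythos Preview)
Your argument is correct and matches the paper's own derivation: the paper states Corollary~1 as an immediate consequence of Tur\'an's Theorem, observing that a graph degree-equivalent to $K_{a,\ldots,a}=T(ka,k)$ has $t(ka,k)$ edges, so the uniqueness clause of Tur\'an's Theorem forces $G=K_{a,\ldots,a}$ whenever $G$ is $K_{k+1}$-free. You have simply spelled out in full the details (edge count via the degree sum, the reverse implication) that the paper leaves implicit in its one-sentence justification.
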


\noindent Our main result is the following generalization of Corollary 1.

\begin{theorem}
Let $G\simeq K_{a_1,\ldots, a_k}$. Then $\omega(G)=k$ if and only if $G=K_{a_1,\ldots, a_k}$.
\end{theorem}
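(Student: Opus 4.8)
The plan is to dispose of the easy direction and then reduce the hard direction to a statement about independent sets via the duality recorded in the Remark. If $G = K_{a_1,\ldots,a_k}$, then any clique meets each part in at most one vertex, and since all parts are nonempty and mutually completely joined, a transversal of the parts is a clique; hence $\omega(G)=k$. For the converse I would pass to the complement. Writing $H=\overline G$, the Remark gives $H \simeq K_{a_1}\cup\cdots\cup K_{a_k}$, and since cliques of $G$ are exactly independent sets of $H$ we have $\omega(G)=\alpha(H)$; moreover $G=K_{a_1,\ldots,a_k}$ iff $H=K_{a_1}\cup\cdots\cup K_{a_k}$. Thus the theorem is equivalent to the dual claim: if $H\simeq K_{a_1}\cup\cdots\cup K_{a_k}$ and $\alpha(H)=k$, then $H=K_{a_1}\cup\cdots\cup K_{a_k}$.

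To prove this dual statement I would invoke the Caro--Wei inequality $\alpha(H)\ge \sum_{v\in V}\frac{1}{d(v;H)+1}$. Because $H$ is degree-equivalent to $K_{a_1}\cup\cdots\cup K_{a_k}$, the right-hand side depends only on the degree sequence and evaluates to $\sum_{i=1}^{k} a_i\cdot\frac{1}{a_i}=k$, since each part $K_{a_i}$ contributes $a_i$ vertices of degree $a_i-1$. Hence $\alpha(H)\ge k$ for every such $H$, and the hypothesis $\alpha(H)=k$ places us exactly at equality in Caro--Wei. The payoff step is that equality in Caro--Wei forces $H$ to be a disjoint union of cliques; granting this, the multiset of component sizes of a cluster graph is read off from its degree sequence (a size-$s$ clique is detected by its $s$ vertices of degree $s-1$, so the number of size-$s$ components equals $|\{v:d(v;H)=s-1\}|/s$), which coincides with that of $K_{a_1}\cup\cdots\cup K_{a_k}$; therefore $H=K_{a_1}\cup\cdots\cup K_{a_k}$, completing the reduction.

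The crux, and the step I expect to be the main obstacle, is the equality characterization of Caro--Wei. I would prove the inequality and its equality case together by induction on $n$, choosing a vertex $v$ of \emph{minimum} degree $\delta$ and deleting its closed neighborhood to form $H'=H-N[v]$. Writing $f$ for the Caro--Wei sum, one has $\alpha(H)\ge 1+\alpha(H')$, and by minimality of $\delta$, $\sum_{w\in N[v]}\frac{1}{d(w;H)+1}\le(\delta+1)\cdot\frac{1}{\delta+1}=1$, while deleting vertices only lowers the remaining degrees; these combine to give $\alpha(H)\ge 1+\alpha(H')\ge 1+f(H')\ge f(H)$. Equality throughout forces (i) every vertex of $N[v]$ has degree exactly $\delta$, (ii) no vertex outside $N[v]$ has a neighbor in $N[v]$, and (iii) $H'$ is a disjoint union of cliques by the inductive hypothesis. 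Conditions (i)--(ii) confine each of the $\delta$ neighbors of $v$ to have all $\delta$ of its neighbors inside the $(\delta+1)$-set $N[v]$, which forces $N[v]$ to induce a clique that is a full connected component; with (iii) this exhibits $H$ as a disjoint union of cliques. Verifying that (i)--(iii) indeed pin down $N[v]$ as a clique component is the delicate bookkeeping, but it is the only real work and yields the dual statement, hence the theorem.
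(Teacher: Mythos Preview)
Your argument is correct, and it follows a genuinely different route from the paper's. The paper also passes to the complement and proves the contrapositive of the dual statement, but it does \emph{not} appeal to the equality case of Caro--Wei. Instead it strips off clique components, invokes Tur\'an's theorem (via Corollary~1) to handle the balanced case $a_1=\cdots=a_k$, and in the unbalanced case explicitly builds an independent set of size $k+1$: first it shows by a counting argument on neighborhoods that the subgraph on the smallest-degree vertices already has independence number at least $c+1$ (where $c$ is the multiplicity of the smallest part size), and then it extends this set one vertex at a time through the larger degree classes, using the strict inequality $a_c<a_{c+1}$ to guarantee an uncovered vertex at each step.

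Your approach is shorter and more conceptual: once one knows that equality in Caro--Wei characterizes cluster graphs, the theorem is immediate, since the degree sequence then pins down the multiset of component sizes. The inductive proof you sketch for that equality case is the standard one and is sound; the bookkeeping you flag as ``delicate'' is exactly right (minimum degree forces all of $N[v]$ to have degree $\delta$, and unchanged degrees in $H'$ force $N[v]$ to be isolated, hence a $K_{\delta+1}$ component). The paper's approach, by contrast, is more hands-on and constructive, and it isolates where Tur\'an's theorem does the real work (only in the balanced base case), but at the cost of a longer two-stage induction.
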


\noindent The proof of Theorem 1 is laid out in the next section. We will now state some related results; first, by Remark 1, Theorem 1 can be restated in terms of the independence number of disjoint cliques, as follows.

\begin{corollary}
Let $G\simeq K_{a_1}\cup\ldots\cup K_{a_k}$. Then $\alpha(G)=k$ if and only if $G=K_{a_1}\cup\ldots\cup K_{a_k}$.
\end{corollary}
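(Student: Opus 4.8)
The plan is to obtain Corollary~2 as a direct dualization of Theorem~1 through the complementation correspondence recorded in Remark~1, so that essentially no new argument beyond bookkeeping is required.

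First I would pass to the complement. Set $H=\overline{G}$. Since complementation sends a vertex of degree $d$ in an $n$-vertex graph to a vertex of degree $n-1-d$, degree-equivalence is preserved under complementation; thus from $G\simeq K_{a_1}\cup\ldots\cup K_{a_k}$ I would conclude $H\simeq \overline{K_{a_1}\cup\ldots\cup K_{a_k}}=K_{a_1,\ldots,a_k}$, where the identification of the complement is exactly the content of Remark~1.

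Next I would translate the invariant and invoke Theorem~1. By Remark~1 a set is independent in $G$ precisely when it is a clique in $H=\overline{G}$, so $\alpha(G)=\omega(H)$; in particular $\alpha(G)=k$ if and only if $\omega(H)=k$. Now Theorem~1 applies to $H$ because $H\simeq K_{a_1,\ldots,a_k}$, giving $\omega(H)=k$ if and only if $H=K_{a_1,\ldots,a_k}$. Finally, complementation is an involution that respects graph equality, so $H=K_{a_1,\ldots,a_k}$ if and only if $G=\overline{H}=\overline{K_{a_1,\ldots,a_k}}=K_{a_1}\cup\ldots\cup K_{a_k}$. Chaining these equivalences yields $\alpha(G)=k$ if and only if $G=K_{a_1}\cup\ldots\cup K_{a_k}$, as desired.

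Because the argument is purely a transcription through the complement map, there is no substantive obstacle; the only points that warrant care are verifying that both degree-equivalence and graph equality are preserved under complementation, so that each hypothesis and each side of the conclusion of Theorem~1 transports cleanly into the statement about $\alpha(G)$.
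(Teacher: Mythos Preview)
Your proposal is correct and matches the paper's approach exactly: the paper derives Corollary~2 from Theorem~1 via the complementation duality of Remark~1, which is precisely the bookkeeping you spell out. There is nothing to add beyond noting that your careful checks (complementation preserving degree-equivalence and graph equality, and $\alpha(G)=\omega(\overline{G})$) are the implicit content of Remark~1.
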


\noindent The conditions of Theorem 1 and Corollary 2 are computationally easy to check, as shown in the following proposition.

\begin{proposition}
Let $G=(V,E)$ be a graph with $|V|=n$ and $|E|=m$. The following conditions can be checked with $O(m + n\log n)$ time. 
\begin{enumerate}
\item$G=K_{a_1,\ldots, a_k}$
\item$G=K_{a_1}\cup\ldots\cup K_{a_k}$
\item$G\simeq K_{a_1,\ldots, a_k}$
\item$G\simeq K_{a_1}\cup\ldots\cup K_{a_k}$
\end{enumerate}
\end{proposition}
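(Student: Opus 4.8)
The plan is to treat the four conditions as two pairs: conditions (1) and (2) concern the actual edge structure of $G$, while (3) and (4) depend only on the degree sequence $D(G)$. By Remark 1 the two members of each pair are interchanged under complementation, so it suffices to give a direct test for one member of each pair and deduce the other by symmetry (or by an analogous argument). Throughout I assume $G$ is presented by sorted adjacency lists, which can be produced from any reasonable representation in $O(m+n)$ time by a radix sort of the $2m$ ordered endpoint pairs.

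For the degree-sequence conditions (3) and (4) I would first determine exactly which multisets of integers arise as the degree sequence of a complete $k$-partite graph, respectively of a disjoint union of cliques. In $K_{a_1,\ldots,a_k}$ every vertex of a part of size $a$ has degree $n-a$, so a value $d$ occurs in $D(G)$ with multiplicity equal to $(n-d)$ times the number of parts of size $n-d$; hence $D(G)$ is realized by some complete $k$-partite graph if and only if, for every distinct value $d$ in $D(G)$, its multiplicity is divisible by $n-d$, the quotients recording how many parts of each size to use. The analogous count for $K_{a_1}\cup\cdots\cup K_{a_k}$, where a vertex of a clique of size $a$ has degree $a-1$, shows that $D(G)$ is realizable if and only if every distinct value $d$ has multiplicity divisible by $d+1$. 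Each test reduces to sorting $D(G)$ in $O(n\log n)$ time, scanning once to extract the distinct values and their multiplicities, and checking the relevant divisibility, all within $O(m+n\log n)$.

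For condition (2), $G=K_{a_1}\cup\cdots\cup K_{a_k}$ exactly when every connected component of $G$ is complete; I would compute the components by one breadth- or depth-first search in $O(m+n)$ and verify that each component on $c$ vertices has every vertex of degree $c-1$ (equivalently $\binom{c}{2}$ edges). For condition (1) I would use the characterization that $G$ is complete multipartite if and only if, after grouping the vertices into classes according to their neighborhoods, the common neighborhood of each class $P$ is exactly $V\setminus P$. Vertices with identical neighborhoods are automatically pairwise nonadjacent, since a vertex never lies in its own neighborhood, so each class is independent; the extra condition $N(v)=V\setminus P$ forces every vertex of $P$ to be adjacent to every vertex outside $P$, which is precisely completeness between parts, and conversely in any complete $k$-partite graph the parts coincide with these classes. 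The test is thus to partition $V$ by identical sorted adjacency lists and then check, for one representative $v$ of each class $P$, that $d(v)=n-|P|$ and that no neighbor of $v$ lies in $P$.

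The main obstacle is executing condition (1) within the claimed bound: comparing neighborhoods across all pairs, or building the dense complement $\overline{G}$ explicitly, would be far too slow. The resolution is that grouping the $n$ vertices by sorted adjacency list is exactly the problem of bucketing a family of sequences of total length $2m$ by equality, which standard lexicographic (radix) sorting of sequences performs in $O(m+n)$ time, after which the per-class verification is a single linear scan. Combined with the $O(n\log n)$ degree-sequence tests and the $O(m+n)$ component test, this yields the overall $O(m+n\log n)$ bound. As a consistency check, the duality of Remark 1 matches the tests: complementation converts the neighborhood-class test for (1) into the component-clique test for (2), and, under $d\mapsto n-1-d$, sends the divisibility condition for (3) to that for (4).
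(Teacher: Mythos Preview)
Your proof is correct and follows essentially the same approach as the paper: the divisibility characterizations you derive for conditions (3) and (4) (multiplicity of $d$ divisible by $n-d$, respectively $d+1$) are exactly those the paper uses, with the same $O(m+n\log n)$ accounting. For conditions (1) and (2) the paper simply cites the well-known $O(m)$ recognition of complete multipartite graphs and their complements, whereas you spell out concrete linear-time procedures (neighborhood partitioning via lexicographic sort, and a component-by-component clique check); this is more detailed but not a genuinely different route.
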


\begin{proof}
Conditions 1 and 2 are easily verified, as it is well-known that complete $k$-partite graphs and their complements can be recognized in $O(m)$ time.

The degree sequence of $G$ can be obtained in $O(m)$ time and the cardinality of each number in the sequence can be found in $O(n\log n)$ time. Then, Condition~3 is satisfied if and only if the cardinality of each number $d$ in $D(G)$ is an integer multiple of $n-d$ and Condition 4 is satisfied if and only if the cardinality of each number $d$ in $D(G)$ is an integer multiple of $d+1$. Each of these can be checked in linear time, so the overall time complexity of verifying Conditions 3 and 4 is $O(m+n\log n)$. \qed
\end{proof}

\noindent On the other hand, the conditions of Theorem 1 and Corollary 2 are not very restrictive, in the sense that the graphs satisfying them form large families and may be quite structurally complex. For example, it is easy to see that these families of graphs have the following properties:
\begin{enumerate}
\item Arbitrary (asymptotic) density or sparsity
\item No forbidden subgraph characterization
\item No special structure like being co-graphs or perfect graphs; see Fig. 1.
\end{enumerate}

\begin{figure}
\begin{center}
\includegraphics[scale=0.45]{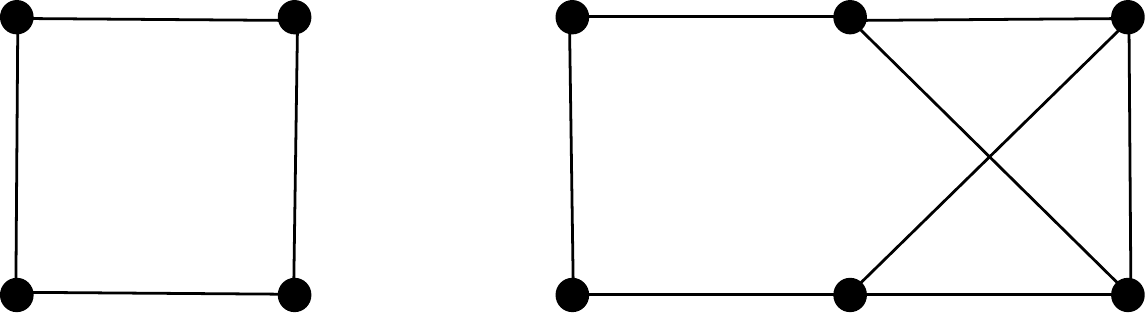}
\end{center}
\caption{A graph degree-equivalent to $K_3 \cup K_3 \cup K_4$ with independence number 4. This graph contains as induced subgraphs the path $P_4$ and the cycle $C_5$, which are forbidden induced subgraphs for co-graphs and perfect graphs.}
\end{figure}

\noindent In fact, as shown below, finding the independence and clique numbers of graphs in these families is NP-complete; thus, it is useful to have the characterizations of (k+1)-clique-free and $(k+1)$-independent set-free graphs given by Theorem~1 and Corollary~2.

\begin{proposition}
If $G\simeq K_{a_1}\cup\ldots\cup K_{a_k}$ or $G\simeq K_{a_1,\ldots,a_k}$, then finding $\alpha(G)$ and $\omega(G)$ is NP-complete. 
\end{proposition}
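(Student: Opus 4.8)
The plan is to reduce from two classical NP-complete problems and to exploit the complementation duality of Remark~1. First, the decision versions of both problems---``$\alpha(G)\ge j$?'' and ``$\omega(G)\ge j$?''---are in NP, since a purported independent set or clique is a polynomial-size, polynomial-time-checkable certificate; it therefore suffices to establish NP-hardness. Moreover, since $G\simeq K_{a_1}\cup\cdots\cup K_{a_k}$ if and only if $\overline G\simeq K_{a_1,\ldots,a_k}$ (Remark~1), and since $\alpha(G)=\omega(\overline G)$ and $\omega(G)=\alpha(\overline G)$, it is enough to prove that computing \emph{both} $\alpha$ and $\omega$ is NP-hard on the class $\mathcal D$ of graphs degree-equivalent to a disjoint union of cliques; complementation then transfers both hardness results to the class of graphs degree-equivalent to a complete $k$-partite graph.

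For the independence number on $\mathcal D$, I would reduce from \textsc{Independent Set} on cubic graphs, which is NP-complete. Given a cubic graph $H$ on $N_0$ vertices (so $N_0$ is even), I note that if $4\mid N_0$ then $H$ is already degree-equivalent to a disjoint union of copies of $K_4$: its degree sequence is $N_0$ threes with $4\mid N_0$, so by the degree-multiplicity characterization (Proposition~1) it lies in $\mathcal D$. If $N_0\equiv 2\pmod 4$, I would attach a disjoint copy of $K_{3,3}$, which is itself cubic with $6\equiv 2\pmod4$ vertices, so the union is cubic on a multiple of $4$ vertices and hence in $\mathcal D$. Since the independence number is additive over disjoint unions and $\alpha(K_{3,3})=3$, computing $\alpha$ on the constructed graph returns $\alpha(H)$ up to a known additive constant, completing the reduction.

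For the clique number on $\mathcal D$, I would reduce from \textsc{Clique} on $r$-regular graphs, which is well known to be NP-complete (and where the target clique size $j$ may be assumed to exceed $2$). Given an $r$-regular graph $H$, I would attach a disjoint $r$-regular \emph{triangle-free} gadget---for instance an $r$-regular bipartite graph, whose order can be prescribed to be any sufficiently large even number---chosen so that the total vertex count becomes a multiple of $r+1$; the resulting graph is $r$-regular of order divisible by $r+1$, hence degree-equivalent to a disjoint union of copies of $K_{r+1}$ and so in $\mathcal D$. Because the clique number of a disjoint union is the maximum over its components and the gadget has clique number $2<j$, the clique number is unchanged for the purpose of the decision, so computing $\omega$ on the constructed graph solves \textsc{Clique} on $H$.

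The routine but delicate part of both reductions is the bookkeeping that forces the padded graph to land \emph{exactly} in $\mathcal D$ while leaving the relevant parameter under control: one must verify that gadgets of the required order, with the prescribed divisibility and parity of the vertex count, exist and have the stated clique or independence number. I expect this---rather than any conceptual difficulty---to be the main obstacle; it is handled by choosing the bipartite gadget's part sizes appropriately and splitting into the cases $r$ even and $r$ odd, using that $N_0$ is forced to be even when $r$ is odd. If one instead prefers to reduce from \textsc{Clique} on arbitrary graphs, the analogous obstacle becomes constructing a triangle-free padding that realizes an arbitrary set of degree-multiplicity residues, an Erd\H{o}s--Gallai-type realization problem.
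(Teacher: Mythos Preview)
Your proposal is correct, and the core reduction for the independence number is the same as the paper's: reduce \textsc{Independent Set} on cubic graphs to $\alpha$ on the class $\mathcal D$ of graphs degree-equivalent to a disjoint union of cliques, then invoke complementation for the clique number on the multipartite side. The execution differs, however. Where you pad a cubic $H$ with a disjoint $K_{3,3}$ when $N_0\equiv 2\pmod 4$ and track the additive shift $\alpha(H\cup K_{3,3})=\alpha(H)+3$, the paper simply takes $G'=H\sqcup H\sqcup H\sqcup H$: this is automatically cubic on $4N_0$ vertices, hence $G'\simeq\bigcup K_4$, and $\alpha(G')=4\,\alpha(H)$, with no case split and no gadget bookkeeping. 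Your second reduction (Clique on $r$-regular graphs plus an $r$-regular bipartite pad to force $|V|\equiv 0\pmod{r{+}1}$) is sound but not present in the paper at all; the paper interprets the proposition as asking only that $\alpha$ be hard on $\mathcal D$ and $\omega$ be hard on the complementary class, and stops after the single reduction and one complementation. So your argument proves strictly more (all four hardness statements) at the cost of the ``delicate bookkeeping'' you anticipate, while the paper's four-copies trick dispatches the stated result in two lines.
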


\begin{proof}

Let $\mathcal{P}$ be the problem of finding the independence number of a cubic graph; it is well-known that $\mathcal{P}$ is NP-complete \cite{GJ,GJ2}. Let $\mathcal{R}$ be the problem of finding the independence number of a graph which is degree equivalent to a disjoint union of cliques. We will demonstrate a polynomial reduction of $\mathcal{P}$ to $\mathcal{R}$.

Let $G=(V,E)$ be an arbitrary cubic graph with $|V|=n$. Let $G'=(V',E')$ be the disjoint union of four copies of $G$; thus $G' \simeq \bigcup_{i=1}^{n} K_4$. Obviously, the time and space needed to construct $G'$ is polynomial in $n$. Moreover, $\alpha(G)=\alpha(G')/4$, since pairwise non-adjacent vertices may be chosen independently in each copy of $G$ in $G'$. Thus, $\mathcal{R}$ is NP-complete, as well. 

Furthermore, the time and space needed to construct the complement of an $n$-vertex graph is polynomial in $n$, and the clique number of a graph is equal to the independence number of its complement. Thus, the problem of finding the clique number of a graph which is degree equivalent to a complete multipartite graph is NP-complete, as well. \qed

\end{proof}

\noindent Caro and Wei \cite{caro_wei} have shown that $\alpha(G) \geq  \sum_{i=1}^n \frac{1}{d_i+1}$, where $D(G)=\{d_1,\ldots,d_n\}$. If $G\simeq K_{a_1}\cup\ldots\cup K_{a_k}$, then $a_i$ appears $a_i+1$ times in $D(G)$, $1\leq i \leq k$; thus, the Caro-Wei bound yields $\alpha(G)\geq k$. Using this fact, Corollary 2 (and thus Theorem 1) is equivalent to the following statement.

\begin{corollary}
Let $G\simeq K_{a_1}\cup\ldots\cup K_{a_k}$. If $G\neq K_{a_1}\cup\ldots\cup K_{a_k}$, then $\alpha(G)\geq k+1$.
\end{corollary}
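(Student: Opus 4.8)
The plan is to deduce this statement directly from Corollary~2 together with the Caro--Wei bound, treating it as the explicit form of the equivalence announced in the preceding paragraph. The crucial point is that the Caro--Wei lower bound $\alpha(G)\geq\sum_{i=1}^{n}\frac{1}{d_i+1}$ depends only on the degree sequence $D(G)$; hence it takes the same value for every graph $G$ with $G\simeq K_{a_1}\cup\ldots\cup K_{a_k}$. As recorded in the text immediately before the statement, that value is $k$, so one has the uniform inequality $\alpha(G)\geq k$ for all such $G$, whether or not $G$ is the honest disjoint union of cliques.

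First I would invoke this inequality to fix the floor $\alpha(G)\geq k$. Next, under the hypothesis $G\neq K_{a_1}\cup\ldots\cup K_{a_k}$, I would apply the ``only if'' direction of Corollary~2 in contrapositive form, which excludes the equality $\alpha(G)=k$. Finally, since $\alpha(G)$ is an integer satisfying both $\alpha(G)\geq k$ and $\alpha(G)\neq k$, I would conclude $\alpha(G)\geq k+1$, which is precisely the claim.

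I do not anticipate a genuine obstacle in this deduction: every difficulty has already been absorbed into Corollary~2 (equivalently Theorem~1), whose nontrivial direction is exactly what forces $\alpha(G)$ strictly above the Caro--Wei floor whenever $G$ fails to be the genuine disjoint union of cliques. To verify that Corollary~4 is a faithful reformulation rather than a weakening, I would also trace the converse implication: combining Corollary~4 with the elementary observation that $\alpha(K_{a_1}\cup\ldots\cup K_{a_k})=k$ --- a maximum independent set selects exactly one vertex from each clique, and no more --- recovers both directions of Corollary~2. This confirms that Corollary~4 and Corollary~2 are interderivable, as asserted in the text.
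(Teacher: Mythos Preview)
Your deduction is logically valid as a verification of the equivalence announced before Corollary~3, but it is circular relative to the paper's architecture and therefore does not constitute a proof. In this paper, Corollary~2 has no independent proof: Section~3 opens with ``For technical simplicity, we will prove Corollary~3, which is equivalent to Theorem~1,'' and the entire content of Theorem~1 (hence Corollary~2) is established \emph{by} proving Corollary~3 directly. So invoking Corollary~2 to obtain Corollary~3 simply assumes what is to be shown.

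The paper's actual proof of Corollary~3 is a self-contained combinatorial argument. It first strips off any clique components, then handles the case $a_1=\cdots=a_k$ via Corollary~1 (Tur\'an). In the remaining case $a_1=\cdots=a_c<a_{c+1}\le\cdots\le a_k$, it partitions $V(G)$ into blocks $S_1,\ldots,S_k$ by degree, sets $G^{c+i}=G[S_1\cup\cdots\cup S_{c+i}]$, and proves $\alpha(G^c)\ge c+1$ by a neighborhood-counting contradiction (using that every vertex in $G^c$ has degree at most $a_1-1$ and that $G^c$ has no clique component of size $a_1$). It then inductively extends the independent set one vertex at a time through $G^{c+1},\ldots,G^k=G$, using the strict inequality $a_c<a_{c+1}$ to guarantee an uncovered vertex at each step. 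None of this appears in your proposal. What you have written is exactly the short paragraph the paper uses to justify calling Corollary~3 an equivalent reformulation; it is not the proof.
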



\noindent By Remark 1, Corollary 3 can also be restated as a bound on the clique number, as follows.


\begin{corollary}
Let $G\simeq K_{a_1,\ldots, a_k}$. If $G\neq K_{a_1,\ldots, a_k}$, then $\omega(G)\geq k+1$.
\end{corollary}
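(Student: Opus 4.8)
The plan is to obtain Corollary 5 as the complement-dual of Corollary 3, exactly as the sentence preceding the statement advertises. The engine of the translation is Remark 1: whenever $G \simeq K_{a_1,\ldots,a_k}$ we have $\overline{G} \simeq K_{a_1}\cup\ldots\cup K_{a_k}$, and dually $\overline{K_{a_1,\ldots,a_k}} = K_{a_1}\cup\ldots\cup K_{a_k}$. Since complementation is an involution on simple graphs over a fixed vertex set, $G = K_{a_1,\ldots,a_k}$ if and only if $\overline{G} = K_{a_1}\cup\ldots\cup K_{a_k}$; contrapositively, the hypothesis $G \neq K_{a_1,\ldots,a_k}$ is equivalent to $\overline{G} \neq K_{a_1}\cup\ldots\cup K_{a_k}$.

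With this dictionary in place, I would argue in three short steps. First, I would apply Corollary 3 to the graph $\overline{G}$: it satisfies $\overline{G} \simeq K_{a_1}\cup\ldots\cup K_{a_k}$ and, by the equivalence just noted, $\overline{G} \neq K_{a_1}\cup\ldots\cup K_{a_k}$, so Corollary 3 yields $\alpha(\overline{G}) \geq k+1$. Second, I would invoke the elementary duality recorded in Remark 1 that a set of vertices is a clique in $G$ exactly when it is an independent set in $\overline{G}$, whence $\omega(G) = \alpha(\overline{G})$. Third, I would combine the two to conclude $\omega(G) = \alpha(\overline{G}) \geq k+1$, which is the desired inequality.

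I do not expect any genuine obstacle, since the whole argument is a verbatim transfer across the complement; the single point deserving explicit mention is the equivalence of the two non-equality hypotheses, which rests only on $\overline{\overline{G}} = G$ together with the identity $\overline{K_{a_1,\ldots,a_k}} = K_{a_1}\cup\ldots\cup K_{a_k}$ from Remark 1. For a self-contained alternative that avoids citing Corollary 3, I would instead combine Theorem 1 with the Caro--Wei estimate: applied to $\overline{G} \simeq K_{a_1}\cup\ldots\cup K_{a_k}$ it gives $\alpha(\overline{G}) \geq k$ (the computation already performed in the discussion preceding Corollary 3), hence $\omega(G) = \alpha(\overline{G}) \geq k$ unconditionally, while Theorem 1 rules out $\omega(G) = k$ whenever $G \neq K_{a_1,\ldots,a_k}$; the two statements together force $\omega(G) \geq k+1$.
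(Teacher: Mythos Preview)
Your proposal is correct and matches the paper's approach exactly: the paper derives this corollary directly from Corollary~3 via the complement duality of Remark~1, precisely as you do. Your alternative route through Theorem~1 together with the Caro--Wei bound is also valid and mirrors the paper's own discussion of how these statements are equivalent.
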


\noindent The bounds of Corollaries 3 and 4 are sharp, as shown by the graph in Fig. 1 and its complement. In contrast, it is easy to check that existing bounds like the ones below are not sharp for the families of graphs in Corollaries 3 and 4. 

\renewcommand{\arraystretch}{1.5}
\setlength{\tabcolsep}{25pt}
\begin{tabular}{ l l }
$\alpha(G) \geq  \sum_{i=1}^n \frac{1}{d_i+1}$ & Caro and Wei \cite{caro_wei} \\
$\alpha(G) \geq  \frac{n^2}{n+2m}$ & Tur\'an \cite{ajtai_erdos,griggs,turan} \\
$\alpha(G) \geq  \left\lceil \frac{2n-2m/\lfloor 2m/n \rfloor}{\lfloor 2m/n \rfloor+1}\right\rceil$ & Hansen and Zheng \cite{hansen} \\
$\omega(G)\geq\frac{n^2}{n^2-2m}$& Myers and Liu \cite{myers_liu}\\
$\omega(G)\geq n/(n-(\frac{1}{n}\sum_{i=1}^n d_i^2)^{1/2})$ & Edwards and Elphick \cite{edwards_elphick}\\
\end{tabular}
\newline

\noindent Thus, we have shown that a complete $k$-partite graph is the only graph which does not contain a $(k+1)$-clique among all degree-equivalent graphs. Equivalently, a disjoint union of $k$ cliques is the only graph which does not have a $(k+1)$-independent set among all degree-equivalent graphs. These results can be formulated as bounds on the independence and clique numbers, which are sharper than existing bounds on large families of graphs.


\section{Proof of Theorem 1}

For technical simplicity, we will prove Corollary 3, which is equivalent to Theorem 1. 

\begin{proof}
Let $G\simeq K_{a_1}\cup \ldots \cup K_{a_k}$ and $G\neq K_{a_1}\cup \ldots \cup K_{a_k}$. We want to show that $\alpha(G)\geq k+1$.

If $G$ has a connected component $Q$ which is a clique, $G-Q$ also satisfies the conditions of Corollary 3, and $\alpha(G-Q)\geq k$ if and only if $\alpha(G)\geq k+1$. Thus, without loss of generality, suppose that $G$ has no clique components.

If $a_1=\ldots = a_k$, by Corollary 1, $\alpha(G)\geq k+1$ and we are done. Thus, suppose $a_1=\ldots=a_c<a_{c+1}\leq a_{c+2}\leq \ldots \leq a_k$, where $c\geq 1$. Let $S_1,\ldots,S_k$ be a partition of the vertices of $G$, where $S_i$ has $a_i$ vertices of degree $a_i-1$. For $0\leq i \leq k-c$, let $G^{c+i}=G[S_1\cup\ldots\cup S_{c+i}]$. We will first show that $\alpha(G^c)\geq c+1$ and then by induction that $\alpha(G^{c+i})\geq c+1+i$.

Note that $G^c$ cannot have a clique component of size $a_1$, because such a component would also be a clique component in $G$, and we assumed $G$ has no clique components (there could possibly be smaller clique components in $G^c$). Also note that for any $S\subset V$ and $v\in S$, $d(v;G[S])\leq d(v;G)$; thus, $\forall v\in V(G^c)$, $d(v;G^c)\leq a_1-1$. 

Now, suppose for contradiction that $\alpha(G^c)\leq c$ and let $\mathcal{J}=\{x_1, \ldots, x_j\}$ be a maximum independent set in $G^c$, $j\leq c$. Thus, we have

\begin{equation}
\label{eq2}
\left|\bigcup_{i=1}^j N(x_i;G^c)\right|\leq \sum_{i=1}^j |N(x_i;G^c)| \leq j(a_1-1)\leq ca_1-j=|V(G^c)-\mathcal{J}|,
\end{equation}

\noindent where the first inequality is a basic fact in set theory, the second inequality follows because $d(x_i;G^c)\leq a_1-1$, and the third inequality follows because $j\leq c$.

If $|\bigcup_{i=1}^j N(x_i;G^c)|<|V(G^c)-\mathcal{J}|$, then there must be a vertex $y$ which is not adjacent to any of $x_1, \ldots, x_j$, so $\{y, x_1, \ldots, x_j\}$ is an independent set, contradicting that $\{x_1, \ldots, x_j\}$ is a maximum independent set.

If $|\bigcup_{i=1}^j N(x_i;G^c)|=|V(G^c)-\mathcal{J}|$, then all inequalities in (\ref{eq2}) must be equalities, so $j=c$ and $|\bigcup_{i=1}^j N(x_i;G^c)|= \sum_{i=1}^j |N(x_i;G^c)|$, which implies that $N(x_1;G^c), \ldots, N(x_j;G^c)$ are pairwise disjoint. Now, if $G[N(x_{\ell};G^c)]$ is not a clique for some $\ell \in \{1,\ldots, j\}$, then there are two vertices $y$ and $z$ in $N(x_{\ell};G^c)$ which are not adjacent. Then, $\{x_1, \ldots, x_{\ell-1}, y,z, x_{\ell+1}, \ldots, x_j\}$ is an independent set of size $j+1$, contradicting that $\{x_1, \ldots, x_j\}$ is a maximum independent set. Thus, $G[N(x_i;G^c)]$ must be a clique for each $1\leq i\leq j$ and hence also $G[N(x_i;G^c)\cup x_i]$ must be a clique for each $1\leq i\leq j$. But this means there are $j=c\geq 1$ clique components of size $a_1$ in $G^c$ --- a contradiction.

Thus, $\alpha(G^c)\geq c+1$, so there is an independent set $\mathcal{I}=\{x_1,\ldots,x_{c+1}\}$ in $G^c$. Recall that $a_1=\ldots=a_c$, so we can say that $d(x_1;G)\leq a_1-1$ and $d(x_{i+1};G)\leq a_i-1$ for $1\leq i \leq c$ (in fact, each of these hold with equality).

Now for the inductive step, suppose that $\mathcal{I}=\{x_1,\ldots,x_{c+j+1}\}$ is an independent set in $G^{c+j}$ for some $j\in\{0,\ldots,k-c-1\}$, and that $d(x_1;G)\leq a_1-1$ and $d(x_{i+1};G)\leq a_i-1$ for $1\leq i\leq c+j$. The vertices in $\mathcal{I}$ cannot collectively be adjacent to every vertex of $V(G^{c+j+1})-\mathcal{I}$, since 

$$\left|\bigcup_{i=1}^{c+j+1} N(x_i;G^{c+j+1})\right|\leq \sum_{i=1}^{c+j+1} |N(x_i;G^{c+j+1})|= \sum_{i=1}^{c+j+1} d(x_i;G^{c+j+1})\leq$$

$$\sum_{i=1}^{c+j+1}d(x_i;G)\leq (a_1-1)+\sum_{i=1}^{c+j}(a_i-1)<\sum_{i=1}^{c+j+1} (a_i-1)=\left|V(G^{c+j+1})-\mathcal{I}\right|.$$

\noindent The strict inequality follows from the assumption that $a_c<a_{c+1}\leq \ldots \leq a_k$.

Thus, there must be a vertex $x_{c+j+2}$ in $V(G^{c+j+1})-\mathcal{I}$ which is not adjacent to any vertex in $\mathcal{I}$. This vertex can be added to $\mathcal{I}$, so $\alpha(G^{c+j+1})\geq c+j+2$. Moreover, since $x_{c+j+2}$ is in one of $S_1,\ldots,S_{c+j+1}$, $d(x_{c+j+2};G)\leq a_{c+j+1}-1$ as required for the inductive step.

In particular, for $j=k-c-1$, this means that there is an independent set $\{x_1,\ldots,x_{k+1}\}$ in $G^k=G$, and so $\alpha(G)\geq k+1$. \qed

\end{proof}






\section*{Acknowledgements}

This research was supported by the National Science Foundation under Grant No. 1450681.


\begin{thebibliography}{99}

\bibitem{aigner}
Aigner, M.,
Tur\'an's graph theorem, 
{\em Amer. Math. Monthly} {\bf 102} (1995) 808--816 

\bibitem{ajtai_erdos}
Ajtai, M., P. Erd\H os, J. Koml\' os, E. Szemer\' edi, 
On Turan's theorem for sparse graphs,
{\em Combinatorica} {\bf 1} (1981) 313--317








\bibitem{bondy}
Bondy, J. A., and U. S. R. Murty. \emph{Graph theory with applications}. Vol. 290. London, Macmillan, 1976.
 
\bibitem{edwards_elphick}
Edwards, C. S. and C. H. Elphick, Lower bounds for the clique and the chromatic numbers of a graph. \emph{Discrete Applied Mathematics} {\bf 5} (1983) 51--64


 
 
\bibitem{GJ}
Garey, M., D. Johnson, 
{\em Computers and Intractability}, 
W.H. Freeman \& Company, San Francisco, 1979

\bibitem{GJ2}
Garey, M., D. Johnson, and L. Stockmeyer,
Some simplified NP-complete graph problems,
{\em Theoretical Computer Science} {\bf 1} (1976) 237--267

\bibitem{griggs}
Griggs, J.R.,
Lower bounds on the independence number in terms of the degrees,
{\em J. Combin. Theory Ser. B} {\bf 34} (1983) 22--39

\bibitem{hansen}
Hansen, P. and M.L. Zheng,
Sharp bounds on the order, size, and stability number of graphs, 
{\em Networks} {\bf 23} (1993) 99--102 

\bibitem{myers_liu}
Myers, B.R. and R. Liu,
A lower bound on the chromatic number of a graph,
\emph{Networks} {\bf 1} (1972) 273--277







\bibitem{turan2}
Tur\'an, P.,
On an extremal problem in graph theory, 
{\em Matematikai \'es Fizikai Lapok} (in Hungarian) {\bf 48} (1941) 436--452
 
\bibitem{turan}
Tur\'an, P.,
On the theory of graphs, 
{\em Colloquium Math.} {\bf 3} (1954) 19--30

\bibitem{caro_wei}
Wei, V.K.,
Bound on the stability number of a simple graph,
Bell Laboratories Tech. Memorandum 81-112177-9,
Murray Hill, NJ, 1981




\end{thebibliography}
\end{document}